\newtheorem{theorem}{Theorem}[section]
\newtheorem{corollary}[theorem]{Corollary}
\newtheorem{proposition}[theorem]{Proposition}
\newtheorem{remark}[theorem]{Remark}
\numberwithin{equation}{section}
\begin{document}

\title[A centerless representation of the Virasoro algebra]{A centerless representation of the Virasoro algebra
associated with the unitary circular ensemble}

\author[L.~Haine]{Luc Haine}
\address{L.H., Department of Mathematics, Université catholique de
Louvain, Chemin du Cyclotron 2, 1348 Louvain-la-Neuve, Belgium}
\email{luc.haine@uclouvain.be}

\author[D.~Vanderstichelen]{Didier Vanderstichelen}
\address{D.V., Department of Mathematics, Université catholique de
Louvain, Chemin du Cyclotron 2, 1348 Louvain-la-Neuve, Belgium}
\email{didier.vanderstichelen@uclouvain.be}

\date{January 18 2010}
\thanks{The authors acknowledge the partial support of the Belgian Interuniversity Attraction Pole P06/02. The second author is a Research Fellow at FNRS, Belgium.}
\keywords{Random matrices, Virasoro algebra}
\subjclass[2000]{15A52, 17B68}

\begin{abstract} We consider the 2-dimensional Toda lattice tau functions $\tau_n(t,s;\eta,\theta)$ deforming the probabilities
$\tau_n(\eta,\theta)$ that a randomly chosen matrix from the unitary group $U(n)$, for the Haar measure, has no eigenvalues
within an arc $(\eta,\theta)$ of the unit circle. We show that these tau functions satisfy a centerless Virasoro algebra of constraints,
with a boundary part in the sense of Adler, Shiota and van Moerbeke. As an application, we obtain a new derivation of a differential
equation due to Tracy and Widom, satisfied by these probabilities, linking it to the Painlevé VI equation.
\end{abstract}
\maketitle
\section{Introduction} 
Consider the group $U(n)$ of $n\times n$ unitary matrices, with the normalized Haar measure as a probability measure. The Weyl integral
formula gives the induced density distribution on the eigenvalues of the matrices on the unit circle in the complex plane, and is given by
\begin{equation*}
\frac{1}{n!}\vert\Delta_n(z)\vert^2 \prod_{k=1}^{n}\frac{\mbox{d}z_k}{2\pi iz_k};\quad z_k=e^{i\varphi_k}\;
\mbox{and}\;\Delta_n(z)=\prod_{1\leq k<l\leq n} ( z_k-z_l).
\end{equation*}
Thus, for $\eta, \theta \in ]-\pi,\pi[$, with $\eta\leq \theta$, the probability that a randomly chosen matrix from $U(n)$ has
no eigenvalues within an arc of circle $(\eta, \theta)=\{z\in S^1\vert \eta<\mbox{arg}(z)< \theta\}$ is given by
\begin{equation*}
\tau_n(\eta,\theta)=\frac{1}{(2\pi)^n n!}\int_{\theta}^{2\pi+\eta}\ldots \int_{\theta}^{2\pi+\eta}\prod_{1\leq k<l\leq n}\vert e^{i\varphi_k}-e^{i\varphi_l}\vert^2 \mbox{d}\varphi_1\ldots\mbox{d}\varphi_n.
\end{equation*}
Obviously, this probability depends only on the length $\theta-\eta$. All of this is well known and we refer the
reader to Mehta \cite{Me} for details. We shall denote by
\begin{equation} \label{gaprob}
R(\theta)=-\frac{1}{2}\frac{\mbox{d}}{\mbox{d}\theta}\log\tau_n(-\theta,\theta),
\end{equation}
the logarithmic derivative of the probability that an arc of circle of length $2\theta$ contains no eigenvalues of a randomly chosen unitary matrix.

The starting motivation for our work was to understand a differential equation satisfied by the function $R(\theta)$
\begin{gather}
R(\theta)^2+2\sin\theta\,\cos\theta\,R(\theta)R'(\theta)+\sin^2\theta\,R'(\theta)^2\nonumber\\
=\frac{1}{2}\Big(\frac{1}{4}\sin^2\theta\,\frac{R''(\theta)^2}{R'(\theta)}
+\sin\theta\,\cos\theta\,R''(\theta)+\big(\cos^2\theta+n^2\sin^2\theta\big)\,R'(\theta)\Big),\label{TW}
\end{gather}
obtained by Tracy and Widom in \cite{TW}, from the point of view of the Adler-Shiota-van Moerbeke \cite{ASVM} approach, in terms of Virasoro constraints. Introducing the 2-Toda time-dependent tau functions
\begin{equation} \label{tauint}
\tau_n(t,s;\eta,\theta)=\frac{1}{n!}\int_{[\theta, 2\pi+\eta]^n} \vert\Delta_n(z)\vert^2\,\prod_{k=1}^{n}\Big(e^{\sum_{j=1}^\infty(t_jz_k^j+s_jz_k^{-j})}\;\frac{\mbox{d}z_k}{2\pi iz_k}\Big),
\end{equation}
with $z_k=e^{i\varphi_k}$, deforming the probabilities $\tau_n(\eta,\theta)=\tau_n(0,0;\eta,\theta)$, we discover that they satisfy a set of Virasoro constraints
indexed by \emph{all} integers, decoupling into a boundary-part and a time-part
\begin{equation*}
\frac{1}{i}\Big(e^{ik\theta}\frac{\partial}{\partial \theta}+e^{ik\eta}\frac{\partial}{\partial \eta}\Big)\tau_n(t,s;\eta,\theta)=L_k^{(n)}\tau_n(t,s;\eta,\theta),
\;k\in\mathbb{Z},\;i=\sqrt{-1},
\end{equation*}
with the time-dependent operators $L_k^{(n)}$ providing a centerless representation of the \emph{full} Virasoro algebra, see Section 2
(Theorem 2.2) for a precise statement and the proof of the result.

In their study of Painlevé equations satisfied (as functions of $x$) by integrals of Gessel type  $E_{U(n)}e^{x\;\emph{tr}(M+\overline{M})}$, where the expectation $E_{U(n)}$ refers to integration with respect to the Haar measure over the whole of $U(n)$, Adler and van Moerbeke \cite{AVM2} found the $sl_2$ subalgebra corresponding to $k=-1,0,1$, without boundary terms. The appearance of boundary terms and of a \emph{full} centerless Virasoro algebra is to the best of our knowledge new. From this result, it is easy to obtain equation \eqref{TW}, using the algorithmic method of \cite{ASVM}. Finally, similarly to a result by the first author and Semengue \cite{HS} on the Jacobi polynomial ensemble, we show that $R(\theta)$ is the restriction to the unit circle of a function $r(z)$ defined in the complex plane, so that $\sigma(z)=-i(z-1)r(z)-n^2z/4$ satisfies a special case of the Okamoto-Jimbo-Miwa form of the Painlevé VI equation. This will be explained in Section 3 of the paper.
\section{A centerless representation of the Virasoro algebra} 
The proof of the Virasoro constraints satisfied by the integral \eqref{tauint} is a non-trivial adaptation of the self-similarity argument exploited in the case of the Gaussian ensembles, based on the invariance of the integrals with respect to translations, see \cite{AVM1} and references therein. Here, we replace translations by appropriate rotations. More precisely, setting
\begin{equation} \label{integrand}
\mbox{d}I_n(t,s,z)=\vert\Delta_n(z)\vert^2\,\prod_{\alpha=1}^{n}\Big(e^{\sum_{j=1}^\infty(t_jz_\alpha^j+s_jz_\alpha^{-j})}
\;\frac{\mbox{d}z_\alpha}{2\pi iz_\alpha}\Big),
\end{equation}
with $z_\alpha=e^{i\varphi_\alpha}$ and $\vert\Delta_n(z)\vert^2=\prod_{1\leq \alpha<\beta\leq n}\vert z_\alpha-z_\beta\vert^2$,
we have the fundamental next proposition.
\begin{proposition} The following variational formulas hold
\begin{align}
\frac{\emph{d}}{\emph{d}\varepsilon}\;\emph{d}I_n \big(z_\alpha\mapsto z_\alpha e^{\varepsilon (z_\alpha^k -z_\alpha^{-k})}\big)\big|_ {\varepsilon=0}&=\big(L_k^{(n)}-L_{-k}^{(n)}\big)\;\emph{d}I_n,\label{var1}\\
\frac{\emph{d}}{\emph{d}\varepsilon}\;\emph{d}I_n \big(z_\alpha\mapsto z_\alpha e^{i\varepsilon (z_\alpha^k +z_\alpha^{-k})}\big)\big|_ {\varepsilon=0}&=i\big(L_k^{(n)}+L_{-k}^{(n)}\big)\;\emph{d}I_n\label{var2},
\end{align}
for all $k\geq 0$, with
\begin{align}
L_k^{(n)}=&\sum_{j=1}^{k-1}\frac{\partial^2}{\partial t_{j}\partial t_{k-j}}+n\frac{\partial}{\partial t_k}+\sum_{j=1}^{\infty}jt_j\frac{\partial}{\partial t_{j+k}}\notag\\&-\sum_{j=k+1}^{\infty}js_j\frac{\partial}{\partial s_{j-k}}-\sum_{j=1}^{k-1}js_j\frac{\partial}{\partial t_{k-j}}-nks_k,\quad k\geq 1,\label{virplus}\\
L_0^{(n)}=&\sum_{j=1}^{\infty}jt_j\frac{\partial}{\partial t_{j}}-\sum_{j=1}^{\infty}js_j\frac{\partial}{\partial s_{j}},\label{virzero}\\
L_{-k}^{(n)}=&-\sum_{j=1}^{k-1}\frac{\partial^2}{\partial s_j\partial s_{k-j}}-n\frac{\partial}{\partial s_k}-\sum_{j=1}^{\infty}js_j\frac{\partial}{\partial s_{j+k}}\notag\\&+\sum_{j=k+1}^{\infty}jt_j\frac{\partial}{\partial t_{j-k}}+\sum_{j=1}^{k-1}jt_j\frac{\partial}{\partial s_{k-j}}+nkt_k,\quad k\geq 1\label{virminus}.
\end{align}
\end{proposition}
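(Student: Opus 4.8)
The plan is to establish both identities directly at the level of the integrand \eqref{integrand}, by computing the $\varepsilon$-derivative of the substitution and comparing it with the action of $L_k^{(n)}\mp L_{-k}^{(n)}$ on $dI_n$. First I would write the two transformations in the common form $z_\alpha\mapsto z_\alpha e^{\varepsilon g(z_\alpha)}$, with $g(z)=z^k-z^{-k}$ for \eqref{var1} and $g(z)=i(z^k+z^{-k})$ for \eqref{var2}, and decompose $dI_n$ into its three factors: the Vandermonde $\prod_{\alpha<\beta}(z_\alpha-z_\beta)(z_\alpha^{-1}-z_\beta^{-1})$, using $\bar z_\alpha=z_\alpha^{-1}$ on the circle, the weight $\exp\sum_\alpha\sum_j(t_jz_\alpha^j+s_jz_\alpha^{-j})$, and the measure $\prod_\alpha dz_\alpha/(2\pi i z_\alpha)$. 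Since the substitution acts multiplicatively and reduces to the identity at $\varepsilon=0$, the logarithmic derivative splits into a sum of the three separate contributions.

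A short computation then gives $\frac{d}{d\varepsilon}\log dI_n\big|_{\varepsilon=0}$ as the sum of a measure term $\sum_\alpha z_\alpha g'(z_\alpha)$, a weight term $\sum_\alpha g(z_\alpha)\sum_{j\geq1}j(t_jz_\alpha^j-s_jz_\alpha^{-j})$, and a Vandermonde term $\sum_{\alpha<\beta}\frac{(z_\alpha+z_\beta)(g(z_\alpha)-g(z_\beta))}{z_\alpha-z_\beta}$. On the other side, writing $p_m=\sum_\alpha z_\alpha^m$, the elementary relations $\partial_{t_j}dI_n=p_j\,dI_n$ and $\partial_{s_j}dI_n=p_{-j}\,dI_n$ turn every monomial in the operators \eqref{virplus}--\eqref{virminus} into a power-sum factor, so that $(L_k^{(n)}\mp L_{-k}^{(n)})dI_n$ equals an explicit polynomial in the $p_m$ (with coefficients $t_j,s_j,n$) times $dI_n$. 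The proof then reduces to matching the two polynomials term by term.

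For the matching I expect the weight term to reproduce exactly the $t$- and $s$-linear parts of the operators: expanding $g(z_\alpha)\sum_j j(\cdots)$ produces the sums $\sum_j jt_j p_{j\pm k}$ and $\sum_j js_j p_{\mp j\pm k}$ appearing in \eqref{virplus}--\eqref{virminus}. The crucial point here is that the index $j=k$, which is excluded from the operator sums, contributes the factor $p_0=n$; this is precisely what generates the boundary terms $-nkt_k$ and $-nks_k$. The remaining quadratic pieces $\sum_{i=1}^{k-1}p_ip_{k-i}$ (with their $s$-analogues) together with the coefficient $n$ in front of $p_{\pm k}$ must then come from the Vandermonde and measure terms combined.

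The main obstacle is evaluating the Vandermonde contribution and seeing the coefficient $n$ emerge correctly. I would use $\frac{z_\alpha^k-z_\beta^k}{z_\alpha-z_\beta}=\sum_{i=0}^{k-1}z_\alpha^iz_\beta^{k-1-i}$, observe that the summand is symmetric in $\alpha,\beta$, and symmetrize via $\sum_{\alpha<\beta}=\frac12\big(\sum_{\alpha,\beta}-\sum_{\alpha=\beta}\big)$. For $g=z^k$ the off-diagonal sum yields $p_0p_k+\sum_{i=1}^{k-1}p_ip_{k-i}=np_k+\sum_{i=1}^{k-1}p_ip_{k-i}$, while the diagonal yields $-kp_k$; the $z^{-k}$ half is handled identically with $p_m\to p_{-m}$ and a compensating sign. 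Adding the measure term, which contributes $k\,p_{\pm k}$, the $-kp_k$ from the diagonal cancels against the $+kp_k$ from the measure, leaving exactly the desired $n\,p_{\pm k}$. Assembling the three contributions reproduces $(L_k^{(n)}-L_{-k}^{(n)})dI_n$ for the combination $g=z^k-z^{-k}$, and the case $g=i(z^k+z^{-k})$ is identical up to the overall factor $i$ and the signs of the $p_{-m}$ pieces, giving $i(L_k^{(n)}+L_{-k}^{(n)})dI_n$. The only delicate points are the bookkeeping of signs and index ranges in the Vandermonde step and the recognition of $p_0=n$ as the source of the boundary terms.
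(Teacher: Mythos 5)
Your proposal is correct and follows essentially the same route as the paper's proof: both compute the $\varepsilon$-derivative of each factor of $\mathrm{d}I_n$ under the substitution, express everything through the power sums $p_m=\sum_\alpha z_\alpha^m$ (with $p_0=n$ at the excluded index $j=k$ producing the boundary terms $-nkt_k$, $-nks_k$), and match against $L_k^{(n)}\mp L_{-k}^{(n)}$ via the dictionary $\partial/\partial t_j\leftrightarrow p_j$, $\partial/\partial s_j\leftrightarrow p_{-j}$. The only cosmetic differences are that you merge $\mathrm{d}z_\alpha$ and $1/(2\pi i z_\alpha)$ into a single measure term where the paper keeps them as separate contributions, and you translate operators into power sums rather than the reverse; your cancellation of the diagonal Vandermonde term $-kp_{\pm k}$ against the measure term $+kp_{\pm k}$, leaving $np_{\pm k}$, is exactly the mechanism underlying the paper's Contributions 1, 2 and 4.
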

\begin{proof} We shall only give the proof of \eqref{var1}, the proof of \eqref{var2} is similar. Upon setting
\begin{equation*}
E=\prod_{\alpha=1}^n e^{\sum_{j=1}^\infty(t_jz_\alpha^j+s_jz_\alpha^{-j})},
\end{equation*}
the following four relations hold, for $k\geq 0$,
\begin{align}
\Big(\frac{\partial}{\partial t_k}+n\delta_{k,0}\Big)E&=\Big(\sum_{\alpha=1}^n z_\alpha^k\Big)E\notag\\
\Big(\frac{\partial}{\partial s_k}+n\delta_{k,0}\Big)E&=\Big(\sum_{\alpha=1}^n z_\alpha^{-k}\Big)E, \label{E1}\\
\Bigg(\frac{1}{2}\sum_{\substack{i+j=k\\i,j>0}}\frac{\partial^2}{\partial t_i\partial t_j}-\frac{n}{2}\delta_{k,0}\Bigg)E&=
\Bigg(\sum_{\substack{1\leq \alpha<\beta\leq n\\i+j=k\\i,j>0}} z_\alpha^i z_\beta^j+\frac{k-1}{2}\sum_{\alpha=1}^n z_\alpha^k\Bigg)E\notag\\
\Bigg(\frac{1}{2}\sum_{\substack{i+j=k\\i,j>0}}\frac{\partial^2}{\partial s_i\partial s_j}-\frac{n}{2}\delta_{k,0}\Bigg)E&=
\Bigg(\sum_{\substack{1\leq \alpha<\beta\leq n\\i+j=k\\i,j>0}} z_\alpha^{-i} z_\beta^{-j}+\frac{k-1}{2}\sum_{\alpha=1}^n z_\alpha^{-k}\Bigg)E.
\label{E2}
\end{align}

We split the computation into four contributions, corresponding to various factors in \eqref{integrand}.\\

\emph{Contribution 1}: For $k>0$, we have
\begin{align*}
&\frac{\partial}{\partial \varepsilon} \big\vert\Delta_n\big(ze^{\varepsilon(z^k-z^{-k})}\big)\big\vert^2\Big|_{\varepsilon=0}
\\&=\vert\Delta_n(z)\vert^2 \sum_{1\leq \alpha<\beta\leq n}\frac{(z_\alpha+z_\beta)(z_\alpha^k-z_\beta^k-(z_\alpha^{-k}-z_\beta^{-k}))}{z_\alpha-z_\beta}
\\&=\vert\Delta_n(z)\vert^2 \sum_{1\leq \alpha<\beta\leq n}(z_\alpha+z_\beta)\Big(\sum_{i=0}^{k-1}z_\alpha^iz_\beta^{k-1-i}+
\sum_{i=0}^{k-1}z_\alpha^{-i-1}z_\beta^{i-k}\Big)
\\&=\vert\Delta_n(z)\vert^2 E^{-1}\Bigg[2\sum_{\substack{1\leq \alpha<\beta\leq n\\i+j=k\\i,j>0}} (z_\alpha^i z_\beta^j+z_\alpha^{-i} z_\beta^{-j})+(n-1)\sum_{\alpha=1}^n(z_\alpha^k+z_\alpha^{-k})\Bigg]E.
\end{align*}
Using the four relations \eqref{E1} and \eqref{E2}, we obtain
\begin{multline}
\frac{\partial}{\partial \varepsilon} \big\vert\Delta_n\big(ze^{\varepsilon(z^k-z^{-k})}\big)\big\vert^2\Big|_{\varepsilon=0}
 =2\vert\Delta_n(z)\vert^2 E^{-1}\Bigg[\frac{1}{2}\sum_{\substack{i+j=k\\i,j>0}}\frac{\partial^2}{\partial t_i\partial t_j}
\\+\frac{1}{2}\sum_{\substack{i+j=k\\i,j>0}}\frac{\partial^2}{\partial s_i\partial s_j}+\frac{n-k}{2}\frac{\partial}{\partial t_k}+
\frac{n-k}{2}\frac{\partial}{\partial s_k}\Bigg]E,\label{c1}
\end{multline}
which is also trivially satisfied for $k=0$.\\

\emph{Contribution 2}: For $k\geq 0$, using the relations \eqref{E1}, we have
\begin{align}
&\frac{\partial}{\partial \varepsilon}\prod_{\alpha=1}^n \mbox{d}\big(z_\alpha e^{\varepsilon(z_\alpha^k-z_\alpha^{-k})}\big)\Big|_{\varepsilon=0}
\notag\\&=E^{-1}\sum_{\alpha=1}^n\big((k+1)z_\alpha^k+(k-1)z_\alpha^{-k}\big)E\prod_{\alpha=1}^n\mbox{d}z_\alpha
\notag\\&=E^{-1}\Big[(k+1)\frac{\partial}{\partial t_k}+(k-1)\frac{\partial}{\partial s_k}\Big]E\prod_{\alpha=1}^n\mbox{d}z_\alpha.
\label{c2}
\end{align}

\emph{Contribution 3}: For $k\geq 0$, using the relations \eqref{E1}, we have
\begin{align}
&\frac{\partial}{\partial \varepsilon} \prod_{\alpha=1}^n e^{\sum_{j=1}^\infty
\Big(t_j\big(z_\alpha e^{\varepsilon (z_\alpha^k-z_\alpha^{-k})}\big)^j
+s_j\big(z_\alpha e^{\varepsilon (z_\alpha^k-z_\alpha^{-k})}\big)^{-j}\Big)}\Big|_{\varepsilon=0}
\notag\\&= \sum_{\alpha=1}^n \Big[\sum_{j=1}^\infty jt_jz_\alpha^j(z_\alpha^k-z_\alpha^{-k})-\sum_{j=1}^{\infty}js_jz_{\alpha}^{-j}(z_\alpha^k-z_\alpha^{-k})\Big]E
\notag\\&=\Bigg[\sum_{j=1}^\infty jt_j\sum_{\alpha=1}^nz_\alpha^{j+k}-\sum_{j=1}^{k-1}jt_j\sum_{\alpha=1}^nz_\alpha^{j-k}-
\sum_{j=k}^\infty jt_j\sum_{\alpha=1}^nz_\alpha^{j-k}
\notag\\&\qquad -\sum_{j=1}^{k-1}js_j\sum_{\alpha=1}^nz_\alpha^{k-j}-\sum_{j=k}^\infty js_j\sum_{\alpha=1}^nz_\alpha^{k-j}
+\sum_{j=1}^\infty js_j\sum_{\alpha=1}^nz_\alpha^{-k-j}\Bigg]E
\notag\\&=\Bigg[\sum_{j=1}^\infty jt_j\frac{\partial}{\partial t_{k+j}}-\sum_{j=1}^{k-1}jt_j\frac{\partial}{\partial s_{k-j}}
-\sum_{j=k+1}^\infty jt_j\frac{\partial}{\partial t_{j-k}}-nkt_k
\notag\\&\qquad -\sum_{j=1}^{k-1}js_j\frac{\partial}{\partial t_{k-j}}-\sum_{j=k+1}^\infty js_j\frac{\partial}{\partial s_{j-k}}-nks_k
+\sum_{j=1}^\infty js_j\frac{\partial}{\partial s_{k+j}}\Bigg]E.\label{c3}
\end{align}

\emph{Contribution 4}: For $k\geq 0$, using the relations \eqref{E1}, we have
\begin{align}
\frac{\partial}{\partial \varepsilon} \prod_{\alpha=1}^n \frac{1}{2\pi i z_\alpha e^{\varepsilon(z_\alpha^k-z_\alpha^{-k})}}\Big|_{\varepsilon=0}
&=E^{-1}\Big[-\sum_{\alpha=1}^nz_\alpha^k+\sum_{\alpha=1}^nz_\alpha^{-k}\Big]E\prod_{\alpha=1}^n\frac{1}{2\pi i z_\alpha}
\notag\\&=E^{-1}\Big[-\frac{\partial}{\partial t_k}+\frac{\partial}{\partial s_k}\Big]E\prod_{\alpha=1}^n\frac{1}{2\pi i z_\alpha}.
\label{c4}
\end{align}

Adding up \eqref{c1}, \eqref{c2}, \eqref{c3} and \eqref{c4} gives \eqref{var1}. This concludes the proof of Proposition 2.1.
\end{proof}
We are now able to state our main result.
\begin{theorem} \emph{(i)} The tau functions \footnote{See the beginning of Section 3, for a justification of the terminology.} $\tau_n(t,s;\eta,\theta), n\geq 1$, defined in \eqref{tauint}, satisfy
\begin{equation} \label{virb}
\mathcal{B}_k(\eta,\theta)\tau_n(t,s;\eta,\theta)=L_k^{(n)}\tau_n(t,s;\eta,\theta),\quad k\in\mathbb{Z},
\end{equation}
with $L_k^{(n)},k\in\mathbb{Z}$, defined as in \eqref{virplus}, \eqref{virzero}, \eqref{virminus}, and
\begin{equation}\label{b}
\mathcal{B}_k(\eta,\theta)=\frac{1}{i}\Big(e^{ik\theta}\frac{\partial}{\partial \theta}+e^{ik\eta}\frac{\partial}{\partial \eta}\Big);
\quad i=\sqrt{-1}.
\end{equation}

\emph{(ii)} The operators $L_k^{(n)},k\in\mathbb{Z}$, satisfy the commutation relations of the centerless Virasoro algebra, that is
\begin{equation}\label{virc}
\big[L_k^{(n)},L_l^{(n)}\big]=(k-l)L_{k+l}^{(n)},\quad k,l\in\mathbb{Z}.
\end{equation}
\end{theorem}
\begin{proof} (i) Denoting $z_\alpha=e^{i\varphi_\alpha}$, the change of variable $z_\alpha\mapsto z_\alpha e^{\varepsilon (z_\alpha^k -z_\alpha^{-k})}$ in the integral \eqref{tauint} gives the following transformation on the angle
$\varphi_\alpha\mapsto \varphi_\alpha+2\varepsilon \sin(k\varphi_\alpha)$, inducing a change in the limits of integration given by the inverse map
\begin{equation} \label{lim1}
\varphi_\alpha\mapsto \varphi_\alpha-2\varepsilon\sin(k\varphi_\alpha)+O(\varepsilon^2),
\end{equation}
for $\varepsilon$ small enough. Making the change of variable in the integral \eqref{tauint}, with the corresponding change in the limits of integration, leaves it invariant. Thus, by differentiating the result with respect to $\varepsilon$ and evaluating it at $\varepsilon=0$, using the chain rule together with \eqref{var1} and \eqref{lim1}, we obtain
\begin{equation}\label{dint1}
0=\Big(-2\sin(k\theta)\frac{\partial}{\partial\theta}-2\sin(k\eta)\frac{\partial}{\partial\eta}
+L_k^{(n)}-L_{-k}^{(n)}\Big)\tau_n(t,s;\eta,\theta).
\end{equation}
Similarly, the change of variable $z_\alpha\mapsto z_\alpha e^{i\varepsilon (z_\alpha^k +z_\alpha^{-k})}$ corresponds to the transformation
$\varphi_\alpha\mapsto \varphi_\alpha+2\varepsilon \cos(k\varphi_\alpha)$, with inverse
\begin{equation*}
\varphi_\alpha\mapsto \varphi_\alpha-2\varepsilon \cos(k\varphi_\alpha)+O(\varepsilon^2),
\end{equation*}
which, using \eqref{var2}, leads to
\begin{equation}\label{dint2}
0=\Big(-\frac{2}{i}\cos(k\theta)\frac{\partial}{\partial\theta}-\frac{2}{i}\cos(k\eta)\frac{\partial}{\partial\eta}
+L_k^{(n)}+L_{-k}^{(n)}\Big)\tau_n(t,s;\eta,\theta).
\end{equation}
Adding and subtracting \eqref{dint1} and \eqref{dint2} gives the constraints \eqref{virb}, with $\mathcal{B}_k(\eta,\theta)$ defined as in \eqref{b}.

(ii) Consider the complex Lie algebra $\mathcal{A}$ given by the direct sum of two commuting copies of the Heisenberg algebra with bases $\{\hbar_a,a_j|j\in\mathbb{Z}\}$ and $\{\hbar_b,b_j|j\in\mathbb{Z}\}$ and defining commutation relations
\begin{gather}
[\hbar_a,a_j]=0\quad,\quad [a_j,a_k]=j\delta_{j,-k}\hbar_a,\notag\\
[\hbar_b,b_j]=0\quad,\quad [b_j,b_k]=j\delta_{j,-k}\hbar_b,\label{Heisenberg algebra}\\
[\hbar_a,\hbar_b]=0\quad,\quad [a_j,b_k]=0\quad,\quad [\hbar_a,b_j]=0\quad,\quad [\hbar_b,a_j]=0,\notag
\end{gather}
with $j,k\in\mathbb{Z}$. Let $\mathcal{B}$ be the space of formal power series in the variables $t_1,t_2,\dots$ and $s_1,s_2,\dots$, and consider the following representation of $\mathcal{A}$ in $\mathcal{B}$ :
\begin{gather}
a_j=\frac{\partial}{\partial t_j}\quad,\quad a_{-j}=jt_j \quad,\quad b_j=\frac{\partial}{\partial s_j}\quad,\quad b_{-j}=js_j,\notag\\
a_0=b_0=\mu\quad,\quad \hbar_a=\hbar_b=1,\label{oscillator representation Heisenberg}
\end{gather}
for $j>0$, and $\mu\in\mathbb{C}$. Define the operators
\begin{equation*}
A_k^{(n)}=\frac{1}{2}\sum_{j\in\mathbb{Z}}:a_{-j}a_{j+k}:\quad,\quad B_k^{(n)}=\frac{1}{2}\sum_{j\in\mathbb{Z}}:b_{-j}b_{j+k}:,
\end{equation*}
where $k\in\mathbb{Z}$, $a_j,b_j$ are as in \eqref{oscillator representation Heisenberg} with $\mu=n$, and where the colons indicate normal ordering, defined by
\begin{equation*}
:a_ja_k:=\left\{\begin{array}{ll} a_ja_k\quad\text{if $j\leq k$},\\ a_ka_j\quad\text{if $j>k$},\end{array}\right.
\end{equation*}
and a similar definition for $:b_jb_k:$, obtained by changing the $a$'s in $b$'s in the former. Using these notations, we can rewrite \eqref{virplus}, \eqref{virzero} and \eqref{virminus} as follows
\begin{align}
&L_k^{(n)}=A_k^{(n)}-B_{-k}^{(n)}+\frac{1}{2}\sum_{j=1}^{k-1}(a_j-b_{-j})(a_{k-j}-b_{j-k}),\quad k\geq 1\notag\\
&L_0^{(n)}=A_0^{(n)}-B_0^{(n)},\notag\\
&L_{-k}^{(n)}=A_{-k}^{(n)}-B_k^{(n)}-\frac{1}{2}\sum_{j=1}^{k-1}(a_{-j}-b_{j})(a_{j-k}-b_{k-j}), \quad k\geq 1 .\notag
\end{align}
As shown in \cite{KR} (see Lecture 2) the operators $A_k^{(n)}$, $k\in\mathbb{Z}$, provide a representation of the Virasoro algebra in $\mathcal{B}$ with central charge $c=1$, that is
\begin{equation} \label{Virasoro A}
[A_k^{(n)},A_l^{(n)}]=(k-l)A_{k+l}^{(n)}+\delta_{k,-l}\frac{k^3-k}{12},
\end{equation}
for $k,l\in\mathbb{Z}$. Similarly, the operators $B_k^{(n)}$ satisfy the commutation relations
\begin{equation} \label{Virasoro B}
[B_k^{(n)},B_l^{(n)}]=(k-l)B_{k+l}^{(n)}+\delta_{k,-l}\frac{k^3-k}{12},
\end{equation}
for $k,l\in\mathbb{Z}$. Furthermore we have for $k,l\in\mathbb{Z}$
\begin{gather}
[a_k,A_l^{(n)}]=ka_{k+l}\quad,\quad[b_k,B_l^{(n)}]=kb_{k+l},\notag\\
[a_k,B_l^{(n)}]=0\quad,\quad[b_k,A_l^{(n)}]=0.\label{commutator Heisenberg and Virasoro operators}
\end{gather}

Let us now establish the commutation relations \eqref{virc}. We give the proof for $k,l\geq 0$, the other cases being similar. As $[A_{i}^{(n)},B_{j}^{(n)}]=0$, $i,j\in\mathbb{Z}$, we have using \eqref{Heisenberg algebra}, \eqref{Virasoro A}, \eqref{Virasoro B} and \eqref{commutator Heisenberg and Virasoro operators}
\begin{multline*}
[L_k^{(n)},L_l^{(n)}]=(k-l)\big(A_{k+l}^{(n)}-B_{-k-l}^{(n)}\big)
-\frac{1}{2}\sum_{j=1}^{l-1}j(a_{j+k}-b_{-j-k})(a_{l-j}-b_{j-l})\\
-\frac{1}{2}\sum_{j=1}^{l-1}(l-j)(a_{j}-b_{-j})(a_{k+l-j}-b_{j-k-l})
+\frac{1}{2}\sum_{j=1}^{k-1}j(a_{j+l}-b_{-j-l})(a_{k-j}-b_{j-k})\\
+\frac{1}{2}\sum_{j=1}^{k-1}(k-j)(a_{j}-b_{-j})(a_{k+l-j}-b_{j-k-l}).
\end{multline*}
Relabeling the indices in the sums, we have
\begin{align}
[L_k^{(n)},L_l^{(n)}]=&(k-l)\big(A_{k+l}^{(n)}-B_{-k-l}^{(n)}\big)\notag\\
&-\frac{1}{2}\sum_{j=k+1}^{k+l-1}(j-k)(a_{j}-b_{-j})(a_{k+l-j}-b_{j-k-l})\notag\\
&-\frac{1}{2}\sum_{j=1}^{l-1}(l-j)(a_{j}-b_{-j})(a_{k+l-j}-b_{j-k-l})\notag\\
&+\frac{1}{2}\sum_{j=l+1}^{k+l-1}(j-l)(a_{j}-b_{-j})(a_{k+l-j}-b_{j-k-l})\notag\\
&+\frac{1}{2}\sum_{j=1}^{k-1}(k-j)(a_{j}-b_{-j})(a_{k+l-j}-b_{j-k-l})\notag\\
=&(k-l)L_{k+l}^{(n)}.\notag
\end{align}
This concludes the proof of Theorem 2.2.
\end{proof}
\section{The unitary circular ensemble and the Painlevé VI equation} 
It is well known, see for instance \cite{Me}, that the integral $\tau_n(t,s;\eta,\theta)$ defined in \eqref{tauint} can be represented as a Toeplitz determinant
\begin{equation} \label{taudet}
\tau_n(t,s;\eta,\theta)=\mbox{det}\big(\mu_{k-l}(t,s;\eta,\theta)\big)_{0\leq k,l\leq n-1},
\end{equation}
with
\begin{equation*}
\mu_k(t,s;\eta,\theta)=\int_{\theta}^{2\pi+\eta} z^k \Big(e^{\sum_{j=1}^\infty(t_jz ^j+s_jz ^{-j})}\;\frac{\mbox{d}z}{2\pi iz}\Big);\
z=e^{i\varphi}, k\in\mathbb{Z}.
\end{equation*}
A nice consequence of this representation is that $\tau_n(t,s;\eta,\theta)$ is a tau function of a reduction of the 2-Toda lattice hierarchy, that was called the Toeplitz hierarchy in  \cite{AVM2}. Therefore, as with any 2-Toda tau function (see \cite{UT}), it satisfies the KP equation in the $t=(t_1,t_2,\ldots)$ (or $s=(s_1,s_2,\ldots)$) variables separately
\begin{equation}
\Big(\frac{\partial^4}{\partial t_1^4} + 3\frac{\partial^2}{\partial t_2^2}-4\frac{\partial^2}{\partial t_1\partial t_3}\Big)\log\tau_n+6 \Big(\frac{\partial^2}{\partial t_1^2}\log\tau_n\Big)^2=0.\label{KP}
\end{equation}

As announced in the introduction, in this section, using the method of \cite{ASVM}, we establish the following result.
\begin{theorem} The Virasoro constraints \eqref{virb}, combined with the KP equation \eqref{KP} in the $t$ variables (or the KP equation in the $s$ variables), imply that the function $R(\theta)$ defined in \eqref{gaprob} satisfies \eqref{TW}.
\end{theorem}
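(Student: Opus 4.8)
The plan is to follow the Adler--Shiota--van Moerbeke (ASVM) algorithm, whose core idea is that the Virasoro constraints \eqref{virb} express derivatives of $\log\tau_n$ in the $t,s$ directions in terms of the boundary operators $\mathcal{B}_k(\eta,\theta)$ acting on $\log\tau_n$, after which one eliminates the Toda time variables $t,s$ by setting them to zero and combines the resulting relations with the KP equation \eqref{KP}. Concretely, I would introduce the function $F=\log\tau_n(t,s;\eta,\theta)$ and, specializing to the physically relevant locus $\eta=-\theta$ where $\tau_n(-\theta,\theta)$ is the gap probability, recall from \eqref{gaprob} that $R(\theta)=-\tfrac12\frac{\mathrm d}{\mathrm d\theta}\log\tau_n(-\theta,\theta)$. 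The first step is therefore to translate the boundary operators evaluated at $\eta=-\theta$ into ordinary $\theta$-derivatives: along the diagonal $\eta=-\theta$ one has $\frac{\mathrm d}{\mathrm d\theta}=\frac{\partial}{\partial\theta}-\frac{\partial}{\partial\eta}$, so $\mathcal{B}_k$ for the relevant small values of $k$ produces combinations of $R$, $R'$, $R''$ with trigonometric coefficients in $\theta$.

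Next I would extract the specific constraints needed. Since \eqref{TW} involves up to second derivatives of $R$ and the coefficients $\sin\theta\cos\theta$, $\cos^2\theta+n^2\sin^2\theta$ — i.e. frequencies up to $2\theta$ — I expect only the constraints for $k=0,\pm1,\pm2$ to be required. Evaluating \eqref{virb} at $t=s=0$ kills most of the operators $L_k^{(n)}$ (those containing explicit $t_j$ or $s_j$ factors vanish, and the second-order $\partial^2/\partial t_i\partial t_j$ and the linear $\partial/\partial t_k$ terms survive), leaving relations that equate $\mathcal{B}_k$ acting on $F$ at $t=s=0$ to low-order $t,s$-derivatives of $F$ at $t=s=0$. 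The second step is to record these as expressions for $\partial_{t_j}F$, $\partial_{s_j}F$ and $\partial^2_{t_it_j}F$ (the data entering KP) in terms of $R$ and its $\theta$-derivatives, using the diagonal identification above; this is the heart of the ASVM elimination and is essentially bookkeeping once the constraints are written out.

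The third step is to feed these expressions into the KP equation \eqref{KP}. The KP equation involves $\partial^4_{t_1}\log\tau_n$, $\partial^2_{t_2}\log\tau_n$, $\partial_{t_1}\partial_{t_3}\log\tau_n$ and $(\partial^2_{t_1}\log\tau_n)^2$; the Virasoro relations must be differentiated in $\theta$ as needed and iterated so that every one of these four KP ingredients is rewritten purely in terms of $R,R',R''$ and $\theta$. Substituting into \eqref{KP} and simplifying the trigonometric coefficients should collapse to exactly \eqref{TW}. I anticipate the main obstacle to be the consistency and closure of this elimination: the raw KP equation naively requires $t$-derivatives up to order four and mixed derivatives like $\partial_{t_1}\partial_{t_3}$, which are not all directly furnished by the finitely many low-$k$ constraints, so one must show that repeated $\theta$-differentiation of the $k=0,\pm1,\pm2$ relations, together with the commutator structure \eqref{virc}, suffices to express every needed quantity in closed form without introducing higher $t,s$-derivatives that escape the system. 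Managing this closure — and controlling the trigonometric algebra so the $n^2\sin^2\theta$ coefficient emerges correctly from the $n$-dependent terms $n\,\partial/\partial t_k$ in \eqref{virplus} — is where the real work lies; the rest is the standardized ASVM machinery applied to the constraints proved in Theorem 2.2.
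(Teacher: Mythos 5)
Your overall strategy (Virasoro constraints plus KP, in the ASVM style) is indeed the paper's strategy, and you correctly identify the crux: the KP equation \eqref{KP} needs $t$-derivatives of $\log\tau_n$ of weighted degree up to $4$, which the low-$k$ constraints evaluated at $t=s=0$ do not supply. But the closure mechanism you propose --- repeated $\theta$-differentiation of the $k=0,\pm1,\pm2$ relations evaluated at $t=s=0$, aided by the commutators \eqref{virc} --- does not work. Once you set $t=s=0$, every constraint $\mathcal{B}_k f\vert_{t=s=0}=\big(L_k^{(n)}\tau_n/\tau_n\big)\vert_{t=s=0}$ (with $f=\log\tau_n$) involves $t,s$-derivatives of weighted degree at most $2$, coming from the terms $\partial^2/\partial t_j\partial t_{k-j}$ and $n\,\partial/\partial t_k$ in \eqref{virplus}; the KP ingredients $\partial^4_{t_1}f$, $\partial^2_{t_2}f$, $\partial_{t_1}\partial_{t_3}f$ at $t=s=0$ never appear in any of them. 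Applying $\partial/\partial\theta$ to an already-evaluated relation produces new unknowns (the $\theta$-derivatives of the existing $t$-derivatives) rather than new equations for higher $t$-derivatives, so the system never closes; and the commutator structure \eqref{virc} only regenerates the higher-$k$ constraints, which at $t=s=0$ still contain only weighted-degree-$\le 2$ derivatives. The mechanism that actually closes the system --- and the one the paper uses --- is to differentiate the constraints in the $t$-variables \emph{before} evaluation: keep $t$ free and set only $s=0$ in the $k=1$ and $k=2$ constraints (equations \eqref{Virasoro bis k=1}, \eqref{Virasoro bis k=2}), apply to them all $t$-derivatives of weighted degree $k$ and $k-1$ respectively (legitimate because $\mathcal{B}_k$ commutes with $\partial/\partial t_j$), and only then set $t=0$. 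At each stage $k$ this yields a linear system of $p(k)+p(k-1)$ equations for the $p(k+1)$ unknown $t$-derivatives of weighted degree $k+1$, where $p$ is the partition function; since $p(k+1)\le p(k)+p(k-1)$, the system is solvable by induction, provided $n$ avoids finitely many values. Running this up to $k=3$ handles all derivatives entering KP when $n\ge 4$ (the cases $n=1,2,3$ require a separate direct computation --- a restriction your sketch also omits). Note also that only $k=0,1,2$ are needed when one works with the $t$-KP equation; the negative-$k$ constraints you invoke play no role.

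A second, smaller gap: even after the elimination, substituting into \eqref{KP} does not ``collapse to exactly \eqref{TW}''. It produces a \emph{third-order} ODE for $R(\theta)$, one order higher than \eqref{TW}. The paper then multiplies this equation by the integrating factor $\tfrac14\sin\theta\,\big(2\cos\theta\,R'(\theta)+\sin\theta\,R''(\theta)\big)$, recognizes the result as the total derivative \eqref{TWbis}, i.e. $\frac{\mbox{d}}{\mbox{d}\theta}\big(\sin^2\theta\,R'(\theta)W(\theta)\big)=0$ with $W$ the difference of the two sides of \eqref{TW}, and concludes $W\equiv 0$ (the constant of integration vanishes, e.g.\ from the behavior at $\theta=0$). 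Without this integration step your argument terminates one derivative short of the claimed equation.
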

\begin{proof}
Remembering the definition of $L_0^{(n)}$ in \eqref{virzero}, the Virasoro constraint in \eqref{virb} for $k=0$,
evaluated along the locus $t=s=0$, gives
\begin{equation}
\frac{\partial \log\tau_n(t,s;\eta,\theta)}{\partial \theta}\Bigg|_{t=s=0}=-\frac{\partial \log\tau_n(t,s;\eta,\theta)}{\partial\eta}\Bigg|_{t=s=0},\label{partial theta = - partial eta}
\end{equation}
which is a reformulation of the fact that the gap probability $\tau_n(0,0;\eta,\theta)$ only depends on the length $\theta-\eta$.

Define the operator $\mathcal{D}=\frac{\partial}{\partial\theta}-\frac{\partial}{\partial\eta}$ and put for a fixed $n$
\begin{align}
&f(t,s;\eta,\theta)=\log\tau_n(t,s;\eta,\theta),\notag\\
&g(\eta,\theta)=-\frac{1}{2}\mathcal{D}\log\tau_n(t,s;\eta,\theta)\big|_{t=s=0}.\label{g}
\end{align}
Notice that for $k\geq 0$
\begin{equation}
\mathcal{D}^k\log\tau_n(t,s;\eta,\theta)\big|_{\substack{t=s=0 \\ \eta=-\theta}}=\frac{\mbox{d}^k}{\mbox{d}\theta^k}\log\tau_n(t,s;-\theta,\theta)\big|_{t=s=0}.\notag
\end{equation}
Clearly, from the definition of $R(\theta)$ in \eqref{gaprob}, we have
\begin{equation*}
R(\theta)=g(-\theta,\theta)=-\frac{1}{2}\frac{\mbox{d}}{\mbox{d}\theta}\log\tau_n(t,s;-\theta,\theta)\big|_{t=s=0}.
\end{equation*}

Remembering the definition of $L_k^{(n)}$ in \eqref{virplus}, the constraints in \eqref{virb} for $k=1,2$, evaluated at $s=(s_1,s_2,s_3,\dots)=(0,0,0,\dots)$, can be written
\begin{align}
&\mathcal{B}_1(\eta,\theta)f\Big|_{s=0}=\sum_{j\geq1}jt_j\,\frac{\partial f}{\partial t_{j+1}}\Bigg|_{s=0}+n\,\frac{\partial f}{\partial t_1}\Bigg|_{s=0},\label{Virasoro bis k=1}\\
&\mathcal{B}_2(\eta,\theta)f\Big|_{s=0}=\sum_{j\geq1}jt_j\,\frac{\partial f}{\partial t_{j+2}}\Bigg|_{s=0}+\frac{\partial^2 f}{\partial t_1^2}\Bigg|_{s=0}+\Big(\frac{\partial f}{\partial t_1}\Big)^2\Bigg|_{s=0}+n\,\frac{\partial f}{\partial t_2}\Bigg|_{s=0}.\label{Virasoro bis k=2}
\end{align}
Using \eqref{partial theta = - partial eta} and the definition of $g(\eta,\theta)$ \eqref{g}, the constraint \eqref{Virasoro bis k=1} evaluated along the locus $t=s=0$ gives
\begin{equation} \label{partial t_1}
\frac{\partial f}{\partial t_1}\Bigg|_{t=s=0}=\frac{1}{in}\,(e^{i\eta}-e^{i\theta})g(\eta,\theta).
\end{equation}
Consequently, along the locus $\eta=-\theta$, we have
\begin{equation*}
\frac{\partial f}{\partial t_1}\Bigg|_{\substack{t=s=0 \\ \eta=-\theta}}=-\frac{2}{n}\,\sin(\theta)R(\theta).
\end{equation*}

We then proceed by induction. We call
\begin{equation*}
\frac{\partial^nf}{\partial t_{j_1}\partial t_{j_2}\dots\partial t_{j_n}},
\end{equation*}
a $t$ derivative of weighted degree $|j|=j_1+j_2+\dots+j_n$. Then, for $k\geq 1$, we compute the system formed by
\begin{equation} \label{system}
\begin{cases}
\text{all $t$-derivatives of weighted degree $k$ of}\;\eqref{Virasoro bis k=1},\\
\text{all $t$-derivatives of weighted degree $k-1$ of}\;\eqref{Virasoro bis k=2},
\end{cases}
\end{equation}
evaluated at $t=s=0$. For instance, for $k=1$, \eqref{system} reduces to
\begin{align}
&\mathcal{B}_1(\eta,\theta)\Big(\frac{\partial f}{\partial t_1}\Big|_{t=s=0}\Big)=\frac{\partial f}{\partial t_{2}}\Bigg|_{t=s=0}+n\,\frac{\partial^2 f}{\partial t_1^2}\Bigg|_{t=s=0},\notag\\
&\mathcal{B}_2(\eta,\theta)f\Big|_{t=s=0}=\frac{\partial^2 f}{\partial t_1^2}\Bigg|_{t=s=0}+n\,\frac{\partial f}{\partial t_2}\Bigg|_{t=s=0}+\Bigg(\frac{\partial f}{\partial t_1}\Bigg|_{t=s=0}\Bigg)^2.\notag
\end{align}
After substitution of \eqref{partial t_1}, this system of equations can be solved for $\frac{\partial^2 f}{\partial t_1^2}\Big|_{t=s=0}$ and $\frac{\partial f}{\partial t_{2}}\Big|_{t=s=0}$ in terms of $\eta,\theta$, $g(\eta,\theta)$ and $\mathcal{D}g(\eta,\theta)$, whenever $n\neq 1$. Consequently, on the locus $\eta=-\theta$, the partials $\frac{\partial^2 f}{\partial t_1^2}\Big|_{\substack{t=s=0 \\ \eta=-\theta}}$ and $\frac{\partial f}{\partial t_{2}}\Big|_{\substack{t=s=0 \\ \eta=-\theta}}$ can be expressed in terms of $\theta$, $R(\theta)$ and $R'(\theta)$.\\

For general $k\geq 1$, suppose all the $t$-derivatives of $f$ of weighted degree $k$, evaluated at $t=s=0$, have been expressed in terms of $\eta,\theta$ and $g(\eta,\theta)$, \dots,$\mathcal{D}^{k-1}g(\eta,\theta)$, whenever $n\neq 1,\dots,k-1$. Then \eqref{system} is a system of linear equations where the unknowns are all the $t$-derivatives of $f$ of weighted degree $k+1$, evaluated at $t=s=0$, and the coefficients can be expressed in terms of $\eta,\theta$ and $g(\eta,\theta)$, \dots,$\mathcal{D}^{k-1}g(\eta,\theta)$. This is a system of $p(k)+p(k-1)$ linear equations in $p(k+1)$ unknowns, where $p(k)$ is the number of partitions of the natural number $k$. As $p(k+1)\leq p(k)+p(k-1)$, this system can be solved and all the $t$-derivatives of $f$ of weighted degree $k+1$, evaluated at $t=s=0$ can be expressed in terms of $\eta,\theta$, and $g(\eta,\theta)$, \dots, $\mathcal{D}^kg(\eta,\theta)$, whenever $n\neq k$. Consequently, on the locus $\eta=-\theta$, the $t$-derivatives of $f$ of weighted degree $k+1$, evaluated at $t=s=0$ and on the locus $\eta=-\theta$, can be expressed in terms of $\theta$, $R(\theta)$, $R'(\theta)$, \dots, $R^{(k)}(\theta)$.

Since the KP equation (\ref{KP}) contains $t$-derivatives of $f$ of weighted degree less or equal to 4, by performing the above scheme up to $k=3$, we can express all these derivatives, evaluated at $t=s=0$ and $\eta=-\theta$, in terms of $\theta$, $R(\theta)$ and its first three derivatives, whenever $n\geq 4$. This gives us a third order differential equation for $R(\theta)$:
\begin{multline*}
0=4R(\theta)^2-2\big(n^2+(1-n^2)\cos2\theta\big)R'(\theta)+8\sin2\theta\,R(\theta)R'(\theta)\\
-2\sin2\theta\,R''(\theta)+\sin^2\theta\big(12R'(\theta)^2-R'''(\theta)\big).
\end{multline*}
Multiplying the left-hand and the right-hand side of this equation with $\frac{1}{4}\,\sin\theta\,\Big(2\cos\theta\,R'(\theta)+\sin\theta\,R''(\theta)\Big)$, we obtain
\begin{equation} \label{TWbis}
0=\frac{\mbox{d}}{\mbox{d}\theta}\Big(\sin^2\theta\,R'(\theta)W(\theta)\Big),
\end{equation}
with
\begin{multline*}
W(\theta)=R(\theta)^2+2\sin\theta\,\cos\theta\,R(\theta)R'(\theta)+\sin^2\theta\,R'(\theta)^2\\
-\frac{1}{2}\Big(\frac{1}{4}\sin^2\theta\,\frac{R''(\theta)^2}{R'(\theta)}+\sin\theta\,\cos\theta\,R''(\theta)+\big(\cos^2\theta
+n^2\sin^2\theta\big)\,R'(\theta)\Big).
\end{multline*}
Equation \eqref{TWbis} implies that $W(\theta)=0$, which is the equation \eqref{TW}, obtained by Tracy and Widom in \cite{TW}. This concludes the proof of Theorem 3.1.
\end{proof}
\begin{remark} \emph{In the above proof, we had to assume that $n\geq 4$, where $n$ is the size of the random unitary matrices. For $n=1,2,3$, the function $R(\theta)$ also satisfies \eqref{TW}, as can be shown by direct computation, using the representation \eqref{taudet} of the probability $\tau_n(\eta,\theta)$ as a Toeplitz determinant. It would be interesting to relate the proof with the original derivation in \cite{TW}. For the Gaussian ensembles, the relation between the two methods has been studied in \cite{R}}.
\end{remark}
Finally, similarly to the case of the Jacobi polynomial ensemble (see \cite{HS}), we observe that $R(\theta)$ in \eqref{gaprob} is linked to the Painlevé VI equation. Precisely, we show that it is the restriction to the unit circle of a solution of (a special case of) the Painlevé VI equation, defined for $z\in\mathbb{C}$.
\begin{corollary}
Put $R(\theta)=r(e^{-2i\theta})$. Then, the function
\begin{equation*}
\sigma(z)=-i(z-1)r(z)-\frac{n^2}{4}z
\end{equation*}
satisfies the Okamoto-Jimbo-Miwa form of the Painlevé VI equation
\begin{multline}
[z(z-1)\sigma'']^2+4z(z-1)(\sigma')^3+4\sigma'\sigma^2+4(1-2z)\sigma(\sigma')^2\\-c_1(\sigma')^2
+[2(1-2z)c_4-c_2]\sigma'+4c_4\sigma-c_3=0,\label{PVI}
\end{multline}
with
\begin{equation} \label{constants}
c_1=n^2,\quad c_2=\frac{3n^4}{8},\quad c_3=\frac{n^6}{16},\quad c_4=-\frac{n^4}{16}.
\end{equation}
\end{corollary}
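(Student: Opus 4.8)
The plan is to obtain \eqref{PVI} directly from the Tracy--Widom equation \eqref{TW}, which is already available from Theorem 3.1, by performing the change of independent variable $z=e^{-2i\theta}$ together with the indicated affine substitution, in the spirit of the treatment of the Jacobi ensemble in \cite{HS}. First I would write $R(\theta)=r(z)$ with $z=e^{-2i\theta}$, so that $\frac{\mathrm dz}{\mathrm d\theta}=-2iz$, and compute by the chain rule
\begin{equation*}
R'(\theta)=-2iz\,r'(z),\qquad R''(\theta)=-4\big(z\,r'(z)+z^2r''(z)\big).
\end{equation*}
The same substitution turns the trigonometric coefficients of \eqref{TW} into rational functions of $z$, namely
\begin{equation*}
\sin^2\theta=-\frac{(z-1)^2}{4z},\qquad \cos^2\theta=\frac{(z+1)^2}{4z},\qquad \sin\theta\cos\theta=\frac{1-z^2}{4iz}.
\end{equation*}

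Next I would substitute these expressions into \eqref{TW} and multiply the resulting identity through by $R'=-2iz\,r'$, in order to clear the single denominator coming from the $\frac{R''(\theta)^2}{R'(\theta)}$ term. This produces a \emph{polynomial} differential equation $P\big(z,r,r',r''\big)=0$, which is quadratic in $r''$ and cubic in $r'$. As a first consistency check, this matches the shape of \eqref{PVI}, which is quadratic in $\sigma''$ (through the leading term $[z(z-1)\sigma'']^2$) and cubic in $\sigma'$ (through $4z(z-1)(\sigma')^3$).

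Finally I would introduce $\sigma(z)=-i(z-1)r(z)-\frac{n^2}{4}z$ and invert this relation, together with its first two derivatives
\begin{equation*}
\sigma'=-ir-i(z-1)r'-\tfrac{n^2}{4},\qquad \sigma''=-2ir'-i(z-1)r'',
\end{equation*}
to express $r,r',r''$ as rational functions of $z,\sigma,\sigma',\sigma''$. Substituting these into $P\big(z,r,r',r''\big)=0$ and simplifying should reproduce exactly the Okamoto--Jimbo--Miwa equation \eqref{PVI} with the constants \eqref{constants}. Since $R(\theta)$ satisfies \eqref{TW} on the real arc, the function $r(z)$ obtained by analytic continuation satisfies the transformed equation on a complex neighbourhood, so $\sigma(z)$ solves \eqref{PVI} for $z\in\mathbb C$.

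I expect the main obstacle to be purely computational: the affine shift by $-\frac{n^2}{4}z$ is tuned precisely so that all terms of lower weight collapse into the prescribed constant, linear and quadratic pieces of \eqref{PVI}, and verifying that the coefficients combine to give exactly $c_1=n^2$, $c_2=\frac{3n^4}{8}$, $c_3=\frac{n^6}{16}$, $c_4=-\frac{n^4}{16}$ requires careful bookkeeping. I would organize this by collecting the transformed equation as a polynomial in $\sigma''$, $\sigma'$ and $\sigma$ and matching the coefficient of each monomial, each being a rational function of $z$ that must reduce to the corresponding coefficient in \eqref{PVI}; the cancellation of all spurious $z$-dependence in these coefficients is the delicate point, and it is here that the specific form of the substitution $\sigma=-i(z-1)r-\frac{n^2}{4}z$ is essential.
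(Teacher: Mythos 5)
Your proposal is correct and follows essentially the same route as the paper: pass from \eqref{TW} to a polynomial equation for $r(z)$ via $z=e^{-2i\theta}$ (the paper's intermediate equation \eqref{cTW}, obtained exactly by your chain-rule and trigonometric substitutions and clearing the $R''^2/R'$ denominator), then perform the affine substitution $\sigma=-i(z-1)r-\tfrac{n^2}{4}z$ and match coefficients against \eqref{PVI} with the constants \eqref{constants}. The only cosmetic difference is that the paper leaves the shift as an undetermined constant $x$ and fixes $x=n^2/4$ by annihilating the pure $\sigma^2$ term, whereas you take the shift as given and verify it directly.
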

\begin{proof} From \eqref{TW}, by a straightforward computation, putting $R(\theta)=r(e^{-2i\theta})$, we obtain that $r(z)$ satisfies
\begin{multline}\label{cTW}
[z(z-1)r'']^2+4z^2(z-1)r'r''-4iz(z-1)^2(r')^3-4i(z^2-1)r(r')^2\\+[4z^2-n^2(z-1)^2](r')^2-4ir^2r'=0.
\end{multline}

Substituting in \eqref{cTW}
\begin{equation*}
r(z)=i\frac{\sigma(z)+xz}{z-1}
\end{equation*}
for some constant $x$, and annihilating the coefficient of $\sigma^2$, one finds that $x=n^2/4$.
With this choice of $x$, the new function $\sigma(z)$ satisfies the Painlevé VI equation (\ref{PVI}) if we pick $c_1,c_2,c_3$ and $c_4$ as in \eqref{constants}, which establishes Corollary 3.3.
\end{proof}
\section{Discussion of the results and some further directions} 
Our starting motivation was to understand a differential equation \eqref{TW} due to Tracy and Widom \cite{TW}, satisfied by the logarithmic derivative of the gap probability that an arc of circle of length $2\theta$ contains no eigenvalues of a randomly chosen unitary $n\times n$ matrix, from the point of view of the algebraic approach initiated by Adler, Shiota and van Moerbeke \cite{ASVM}. The main surprise is that the 2-dimensional Toda tau functions \eqref{tauint} deforming these gap probabilities, satisfy a centerless \emph{full} Virasoro algebra of constraints.
The result stands in contrast with  the corresponding integrals for the Gaussian or the orthogonal polynomial ensembles, which roughly satisfy only "half of" a Virasoro type algebra of constraints, see \cite{ASVM}, \cite{HS}, \cite{AVM1} and \cite{R}.

As mentioned at the beginning of Section 3, the integrals \eqref{tauint} can be expressed as Toeplitz determinants, see \eqref{taudet}. As such, they are very special instances of tau functions for the so-called Toeplitz lattices \cite{AVM2}, that is
\begin{equation}\label{Toeptau}
\tau_n(t,s)=\mbox{det}\big(\mu_{k-l}(t,s)\big)_{0\leq k,l\leq n-1},
\end{equation}
where
\begin{equation} \label{Toepmom}
\mu_k(t,s)=\int_{S^1} z^k\;e^{\sum_{j=1}^\infty(t_j z^j+s_j z^{-j})}w(z)\;\frac{\mbox{d}z }{2\pi i z},\quad k\in\mathbb{Z},
\end{equation}
and $w(z)$ is some (complex-valued) weight function defined on the unit circle $S^1$, such that the trigonometric moments
\begin{equation*}
\mu_k=\mu_k(0,0)=\int_{S^1} z^k w(z)\;\frac{\mbox{d}z }{2\pi i z},\quad k\in\mathbb{Z},
\end{equation*}
satisfy $\mbox{det}\big(\mu_{k-l}\big)_{0\leq k,l\leq n-1}\neq 0, \;\forall n\geq 1$. In the special case \eqref{taudet} that we consider in this paper, $w(z)=\chi_{(\eta,\theta)^c}(z)$ is the characteristic function of the complement of the arc of circle $(\eta, \theta)=\{z\in S^1\vert \eta<\mbox{arg}(z)< \theta\}$.

As it immediately follows from \eqref{Toepmom}, at the level of the trigonometric moments, the Toeplitz hierarchy is given by the simple equations
\begin{equation*}
T_j \mu_k\equiv\frac{\partial \mu_k}{\partial t_j}=\mu_{k+j},\qquad T_{-j}\mu_k\equiv\frac{\partial \mu_k}{\partial s_j}=\mu_{k-j},
\quad \forall j\geq 1.
\end{equation*}
Obviously $[T_i,T_j]=0, \forall i,j\in\mathbb{Z}$, if we define $T_0\mu_k=\mu_k$. Following an idea introduced in \cite{HS} in the context of the 1-dimensional Toda lattices, we define the following vector fields on the trigonometric moments
\begin{equation} \label{virmom}
V_j \mu_k=(k+j)\mu_{k+j},\quad \forall j\in\mathbb{Z}.
\end{equation}
These vector fields trivially satisfy the commutation relations
\begin{align}
[V_i,V_j]&=(j-i)V_{i+j}\label{m1}\\
[V_i,T_j]&=j T_{i+j},\quad \forall i,j\in\mathbb{Z},\label{m2}
\end{align}
from which it follows that
\begin{equation}\label{m3}
[[V_i,T_j],T_j]=j[T_{i+j},T_j]=0,\quad \forall i,j\in\mathbb{Z}.
\end{equation}
Equations \eqref{m1}, \eqref{m2} and \eqref{m3} mean that the vector fields $V_j, j\in\mathbb{Z}$, form a Virasoro algebra of master symmetries, in the sense of Fuchssteiner \cite{F}, for the Toeplitz hierarchy.

The tau functions \eqref{Toeptau} admit the following expansion
\begin{equation*}
\tau_n(t,s)=\sum_{\substack{0\leq i_0<\dots<i_{n-1} \\ 0\leq j_0<\dots<j_{n-1}}}p_{\substack{i_0,\dots,i_{n-1} \\ j_0,\dots,j_{n-1}}}S_{i_{n-1}-(n-1),\dots,i_0}(t)S_{j_{n-1}-(n-1),\dots,j_0}(s),
\end{equation*}
where
\begin{equation}\label{Plucker}
p_{\substack{i_0,\dots,i_{n-1} \\ j_0,\dots,j_{n-1}}}=\det\big(\mu_{i_k-j_l}(0,0)\big)_{0\leq k,l\leq n-1},
\end{equation}
are the so-called Plücker coordinates, and $S_{i_1,\dots,i_k}(t)$ denote the Schur polynomials
\begin{equation*}
S_{i_1,\dots,i_k}(t)=\det\big(S_{i_r+s-r}(t)\big)_{1\leq r,s\leq k},
\end{equation*}
with $S_n(t)$ the so-called elementary Schur polynomials defined by the generating function
\begin{equation*}
\exp\big(\sum_{k=1}^\infty t_kx^k\big)=\sum_{n\in\mathbb{Z}} S_n(t_1,t_2,\ldots) x^n.
\end{equation*}

In a forthcoming publication \cite{HV}, we shall establish the next result:
\begin{theorem} For all $k\in\mathbb{Z}$, we have
\begin{multline*}
L^{(n)}_k \tau_n(t,s)=\\\sum_{\substack{0\leq i_0<\dots<i_{n-1} \\ 0\leq j_0<\dots<j_{n-1}}}V_k \Big(p_{\substack{i_0,\dots,i_{n-1} \\ j_0,\dots,j_{n-1}}}\Big)S_{i_{n-1}-(n-1),\dots,i_0}(t)S_{j_{n-1}-(n-1),\dots,j_0}(s),
\end{multline*}
with $L^{(n)}_k, k\in \mathbb{Z}$, defined as in \eqref{virplus}, \eqref{virzero}, \eqref{virminus}, and $V_k \Big(p_{\substack{i_0,\dots,i_{n-1} \\ j_0,\dots,j_{n-1}}}\Big)$ the Lie derivative of the Plücker coordinates \eqref{Plucker} in the direction of the master symmetries $V_k$ of the Toeplitz hierarchy, as defined in \eqref{virmom}.
\end{theorem}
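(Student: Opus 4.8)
The plan is to lift the variational computation of Proposition 2.1 from the special gap weight to an arbitrary Toeplitz weight $w(z)$, and then to extract the master symmetry action on the Pl\"ucker coordinates by a Schur-function rearrangement. The starting point is the Heine--Andr\'eief identity, which rewrites the determinant \eqref{Toeptau} as the multiple integral $\tau_n(t,s)=\frac{1}{n!}\int_{(S^1)^n}\mathrm{d}I_n\,\prod_{\alpha=1}^{n}w(z_\alpha)$, with $\mathrm{d}I_n$ the weight-free integrand \eqref{integrand}. For $k\ge 0$ both maps $z_\alpha\mapsto z_\alpha e^{\varepsilon(z_\alpha^k-z_\alpha^{-k})}$ and $z_\alpha\mapsto z_\alpha e^{i\varepsilon(z_\alpha^k+z_\alpha^{-k})}$ are diffeomorphisms of the circle, so performing either one as a passive change of variables leaves $\tau_n(t,s)$ unchanged. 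Differentiating at $\varepsilon=0$, and using Proposition 2.1 for the $\mathrm{d}I_n$ factor (the time-differential operators commute with the $z$-integration and with the fixed factor $\prod_\alpha w$), I would obtain
\begin{equation*}
0=\bigl(L_k^{(n)}-L_{-k}^{(n)}\bigr)\tau_n+\mathcal{W}_k^-\tau_n,\qquad 0=i\bigl(L_k^{(n)}+L_{-k}^{(n)}\bigr)\tau_n+\mathcal{W}_k^+\tau_n,
\end{equation*}
where $\mathcal{W}_k^{\mp}$ collects the variation of the weight factor $\prod_\alpha w(z_\alpha)$, i.e. the bulk analogue of the endpoint terms that in Theorem 2.2 produced the boundary operator $\mathcal{B}_k$.

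The heart of the argument is the evaluation of this weight variation. Writing $z=e^{i\varphi}$, so that $w'(z)=\frac{1}{iz}\partial_\varphi w$ and $\mu_m=\int e^{im\varphi}e^{\sum(t_jz^j+s_jz^{-j})}w\,\frac{\mathrm{d}\varphi}{2\pi}$, I would compute the induced variation of a single moment and integrate by parts in $\varphi$. For the sinh-type flow this gives $\delta\mu_m=-(\mathcal{V}_k-\mathcal{V}_{-k})\mu_m$ and for the cosh-type flow $\delta\mu_m=-i(\mathcal{V}_k+\mathcal{V}_{-k})\mu_m$, where $\mathcal{V}_k$ is the derivation on the moment functions defined by $\mathcal{V}_k\mu_m=(m+k)\mu_{m+k}+\sum_{j\ge 1}jt_j\,\mu_{m+k+j}-\sum_{j\ge 1}js_j\,\mu_{m+k-j}$. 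Since the weight enters each entry of the determinant \eqref{Toeptau} linearly, $\mathcal{W}_k^{\mp}$ acts on $\tau_n$ by the Leibniz rule as the corresponding derivation, so the two displayed identities read $(L_k^{(n)}-L_{-k}^{(n)})\tau_n=(\mathcal{V}_k-\mathcal{V}_{-k})\tau_n$ and $(L_k^{(n)}+L_{-k}^{(n)})\tau_n=(\mathcal{V}_k+\mathcal{V}_{-k})\tau_n$. Taking their sum and difference yields $L_{\pm k}^{(n)}\tau_n=\mathcal{V}_{\pm k}\tau_n$ for all $k$, the case $k=0$ coming from the cosh-type flow, i.e. a rigid rotation, alone.

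It then remains to identify $\mathcal{V}_k\tau_n$ with the right-hand side of the theorem. Here I would invoke the elementary Schur-polynomial identity $c\,S_c(t)=\sum_{j\ge 1}jt_j\,S_{c-j}(t)$ and its $s$-analogue, obtained by applying $x\partial_x$ to $e^{\sum t_jx^j}=\sum_c S_c(t)x^c$. Together with the expansion $\mu_m(t,s)=\sum_{a,b\ge 0}S_a(t)S_b(s)\,\mu_{m+a-b}(0,0)$ coming from the same generating function, this shows that the full derivation $\mathcal{V}_k$ is nothing but the operation of applying the master symmetry $V_k\mu_p=(p+k)\mu_{p+k}$ of \eqref{virmom} to the base moments and re-expanding, that is $\mathcal{V}_k\mu_m(t,s)=\sum_{a,b\ge 0}S_a(t)S_b(s)\,V_k\mu_{m+a-b}(0,0)$. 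As $V_k$ is a derivation and the Pl\"ucker coordinates \eqref{Plucker} are determinants in the base moments, all the $(t,s)$-dependence is carried by the frozen Schur functions; by Cauchy--Binet, $\mathcal{V}_k\tau_n$ is then obtained from the expansion of $\tau_n$ simply by replacing each $p_{i_0,\dots;j_0,\dots}$ with $V_k(p_{i_0,\dots;j_0,\dots})$, which is exactly the asserted formula.

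The main obstacle I anticipate is the second step: carrying out the weight variation rigorously. One must justify differentiating under the integral sign and interchanging the time-differential operators with the $z$-integration, and, more delicately, control the integration by parts on $S^1$ for a general weight $w$ --- precisely the place where, for the discontinuous gap weight, the boundary contributions survive and produce $\mathcal{B}_k$ rather than the bulk master symmetry. Establishing that for an admissible smooth weight these boundary terms are absent, and that the index factor $(m+k)$ rather than a bare shift emerges correctly from $\partial_\varphi$, is where the real work lies; the remaining Schur-polynomial bookkeeping and the Cauchy--Binet rearrangement are then routine.
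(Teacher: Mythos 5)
The first thing to note is that the paper contains no proof of this statement: Theorem 4.1 is introduced with the sentence ``In a forthcoming publication \cite{HV}, we shall establish the next result'', so there is no in-paper argument to compare yours against, and your proposal must be judged on its own merits. On those merits it is essentially sound, and it is the route the paper itself suggests: you rerun the variational argument of Proposition 2.1 and Theorem 2.2 with the discontinuous gap weight replaced by a smooth weight $w$ on the whole circle, so that the variation of the weight factor produces a bulk term where Theorem 2.2 had the boundary operator $\mathcal{B}_k$. I checked the key computations. For the flow $z\mapsto ze^{\varepsilon(z^k-z^{-k})}$ the non-weight part of the variation of a single moment is $(m+k)\mu_{m+k}-(m-k)\mu_{m-k}+\sum_{j\ge1}jt_j(\mu_{m+k+j}-\mu_{m-k+j})-\sum_{j\ge1}js_j(\mu_{m+k-j}-\mu_{m-k-j})$, which is exactly $(\mathcal{V}_k-\mathcal{V}_{-k})\mu_m$ with your $\mathcal{V}_k$, confirming $\delta\mu_m=-(\mathcal{V}_k-\mathcal{V}_{-k})\mu_m$; the weight variation does act on the determinant \eqref{Toeptau} as an entrywise derivation (differentiate the Heine--Andr\'eief identity with weight $w+\epsilon h$ in $\epsilon$); the identity $cS_c(t)=\sum_{j\ge1}jt_jS_{c-j}(t)$ indeed yields $\mathcal{V}_k\mu_m(t,s)=\sum_{a,b\ge0}S_a(t)S_b(s)\,V_k\mu_{m+a-b}(0,0)$, i.e.\ $\mathcal{V}_k$ is the conjugate of the master symmetry \eqref{virmom} by the Toeplitz time evolution; and the double Cauchy--Binet expansion then converts $\mathcal{V}_k\tau_n$ into the right-hand side of the theorem. (A reassuring consistency check: for the gap weight one computes $\mathcal{B}_k\mu_m=(m+k)\mu_{m+k}$ directly, so Theorem 2.2 is precisely the gap-weight instance of this statement.)

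The one genuine gap is the one you flag in your last paragraph, but you locate it in the wrong place. For a smooth weight nothing delicate happens in the integration by parts: $S^1$ is closed and all factors are smooth and periodic, so boundary terms are automatically absent and the factor $(m+k)$ falls out of the computation as above. The real issue is that the theorem is asserted for an arbitrary admissible weight --- including non-smooth ones such as the characteristic function of Sections 1--3 --- whereas your argument uses $w'$. This is repairable with one further observation: coefficient by coefficient in the monomials of $t$ and $s$, the claimed identity is a polynomial identity in finitely many base moments $\mu_m(0,0)$, and Laurent-polynomial weights (for which $\mu_m$ is just the coefficient of $z^{-m}$ in $w$) already realize arbitrary prescribed values of any finite family of moments; hence the identity proved for smooth weights holds formally, i.e.\ for every admissible weight. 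With that remark added, your outline is a complete and correct proof strategy; whether it coincides with the argument intended in \cite{HV} cannot be determined from this paper.
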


Thus the operators $L^{(n)}_k, k\in \mathbb{Z}$, precisely describe the master symmetries of the Toeplitz hierarchy on the tau functions of this hierarchy. Since master symmetries are usually connected with a bi-hamiltonian structure in the sense of Magri \cite{Ma} (see \cite{MZ} and \cite{D} for an overview), it suggests to investigate the existence of a bi-hamiltonian structure for the Toeplitz hierarchy, which seems to be an open problem.

\end{document}